\newcommand{\qw}[1][-1]{\ar @{-} [0,#1]}
\newcommand{\qwx}[1][-1]{\ar @{-} [#1,0]}
\newcommand{\gate}[1]{*+<.6em>{#1} \POS ="i","i"+UR;"i"+UL **\dir{-};"i"+DL **\dir{-};"i"+DR **\dir{-};"i"+UR **\dir{-},"i" \qw}
\newcommand{\control}{*!<0em,.025em>-=-<.2em>{\bullet}}
\newcommand{\ctrl}[1]{\control \qwx[#1] \qw}
\newcommand{\targ}{*+<.02em,.02em>{\xy ="i","i"-<.39em,0em>;"i"+<.39em,0em> **\dir{-}, "i"-<0em,.39em>;"i"+<0em,.39em> **\dir{-},"i"*\xycircle<.4em>{} \endxy} \qw}
\newcommand{\Qcircuit}{\xymatrix @*=<0em>}
\newcommand{\ov}{\overline}
\newcommand{\mc}{\mathcal}
\newcommand{\mbb}{\mathbb}
\newcommand{\p}{\prime}
\newcommand{\ra}{\rangle}
\newcommand{\la}{\langle}
\newcommand{\al}{\alpha}
\newcommand{\zd}{{\mbb{Z}_D}}
\newtheorem{theorem}{Theorem}
\newtheorem{proposition}[theorem]{Proposition}
\newtheorem{corollary}[theorem]{Corollary}
\newtheorem{lemma}[theorem]{Lemma}
\newtheorem{theoremcase}{Case}
\begin{document}

\title[An Ideal Characterization of the Clifford Operators]{An Ideal Characterization of the Clifford Operators}
\author{J.~M.~Farinholt}
\address{Electromagnetic and Sensor Systems Department, \\Naval Surface Warfare Ctr, Dahlgren Division, \\Dahlgren, VA 22448}
\ead{Jacob.Farinholt@navy.mil}

\begin{abstract}
The Clifford operators are an important and well-studied subset of quantum operations, in both the qubit and higher-dimensional qudit cases. While there are many ways to characterize this set, this paper aims to provide an ideal characterization, in the sense that it has the same characterization in every finite dimension, is characterized by a minimal set of gates, is constructive, and does not make any assumptions about non-Clifford operations or resources (such as the use of ancillas or the ability to make measurements). While most characterizations satisfy some of these properties, this appears to be the first characterization satisfying all of the above. As an application, we use these results to briefly analyze characterizations of Clifford embeddings, that is, the action of logical Clifford operations acting on qunits embedded in higher-dimensional qudits, inside the qudit Clifford framework.

\end{abstract}

\pacs{03.67.Hk, 03.67.Lx, 03.67.Ac}
\ams{20D45, 81P45, 81P68}

\maketitle
\section{Introduction}

As quantum computers come closer to fruition, an issue of concern is the need for a minimal set of logical quantum gates that can be used to efficiently and reliably implement any universal operation on a quantum state. While the Clifford group is far from universal for quantum computing, understanding this group, especially in higher-dimensional \emph{qudit} systems, is of interest as it provides a rich structure from which novel algorithms may be developed with applications in both finite-dimensional and continuous-variable quantum systems. As many quantum systems naturally exhibit more than two dimensions (e.g. a harmonic oscillator), it is reasonable to investigate methods of manipulating these higher-dimensional states for the purposes of quantum computing. Higher dimensional quantum systems and corresponding Clifford groups have been studied in great detail (see \cite{Knill96-LANL_Report-Arxiv} and \cite{Knill96-LANL_Report2}, which particularly focus on their applications to quantum error correction). In particular, generalizations of the Pauli group for $d$-dimensional systems and associated Clifford group have received additional focus in relation to many applications, including nonbinary stabilizer codes, stabilizer states, graph states, and SIC-POVMs (see, for example, \cite{HostDehMoor05, Gottesman98-arXiv, AshikminKnill01-Nonbinary, Abbpleby09-arXiv, Appleby05-JMathPhys, GrasslRottelerBeth03, NielsenBremnerDoddChilds02-PhysRevA, ApplebyEtAl_QIC12} and references therein). A sufficient set of gates to generate the full Clifford group is known in many higher dimensions. In particular, Gottesman \cite{Gottesman98-arXiv} provided a method of implementing Clifford operators using the SUM gate, ancillas, and measurement operations in quantum systems of any odd prime dimension. In a more general mathematical setting, the Clifford group has been studied as the Lie group associated with the symplectic Clifford algebra (see for example \cite{Crum75, Crum77, Crum90a, oziewicz1992parallel, brackx2001clifford, Dixon81} and references therein).

This paper is motivated by a desire to completely (up to global phase) characterize the Clifford operations in as general and simple a form as possible. In particular, we would first like the characterization to be \emph{universal}, that is, the characterization exists in every finite dimension, so that the study of Clifford operations need not be specific to a particular dimension. Secondly, we would like the characterization to be \emph{physical}, that is, we want to be able to relate this mathematical characterization to realizable quantum transformations, namely, by characterizing the operations with a finite set of gates. Thirdly, we want this characterization to be \emph{minimal}, in the sense that the characterizing set is both necessary and sufficient. Here sufficiency means that any Clifford operator can be characterized by elements of this set, while necessity means that no proper subset of this set can completely characterize every Clifford operator. Lastly, we would like this characterization to be \emph{constructable}; that is, in addition to the knowledge that an arbitrary Clifford operator can be characterized with a minimal set of gates, we would like to know specifically how.

Note, in particular, that such a characterization of the Clifford operators can be considered closed, in the sense that it makes no a priori assumptions on the availability of external resources (e.g. ancillas) or non-Clifford operations (e.g. projective measurements). Such a characterizing set will be called a Clifford \emph{basis}, as it will, indeed, be a generating set for the Clifford group acting via conjugation on the corresponding Pauli group, modulo global phase. 

\section{Qudit Pauli Group\label{Sec:QuditPauliGroup}}

No discussion of Clifford operations is complete without first an overview of the Pauli group. In a $d$-dimensional complex Hilbert space $\mc{H}_d$, a pure state $|\psi\ra$ is generally described as a norm one complex linear sum over the standard \emph{computational} basis. This basis is a collection of $d$ orthonormal states labeled $|0\rangle, |1\rangle, \dots, |d-1\rangle$. Then we have $|\psi \rangle = \al_0|0\rangle + \al_1|1\rangle + \dots + \al_{d-1}|d-1\rangle$, subject to 
\begin{equation}
\sum_{i=0}^{d-1} \left| \al_i\right|^2 = 1.
\end{equation}
In the case of qubits, the Pauli group is a subgroup of the 2-dimensional unitary group generated by $\{iI, \ X, \ Z\}$, where $I$ is the identity matrix, $X = |1\ra\la0| + |0\ra\la1|$, and $Z = |0\ra\la0| - |1\ra\la1|$. While there are many ways to generalize the Pauli group to a $d$-dimensional Hilbert space \cite{Knill96-LANL_Report-Arxiv, Knill96-LANL_Report2}, the simplest and most commonly used is the natural generalization of the qubit Pauli group to $d$-dimensional systems, $\mc{P}_d$, which we now describe. Let $\omega$ be a primitive $d^{th}$ root of unity, i.e. $\omega = \exp(2\pi \text{i}/d)$. We now define the operators
\begin{equation}\label{Eq:XZ-Operators}
X = \sum_{x \in \mbb{Z}_d} |x + 1\rangle \langle x|, \ \ \ 
Z = \sum_{z \in \mbb{Z}_d} \omega^{z}|z\rangle \langle z|,
\end{equation}
with addition defined over the group $\mbb{Z}_d$ on $d$ elements. $X$ and $Z$ each have order $d$. Observe that $(XZ)^r = \omega^{r(r-1)/2}X^rZ^r$. It follows that when $d$ is odd, $XZ$ also has order $d$; however, when $d$ is even, $XZ$ will have order $2d$, contributing additional roots of unity. Thus, we use the notation $\widehat{\omega}$ to denote a primitive $D^{th}$ root of unity, where
\begin{equation}\label{Eq:D}
D = \left\{\begin{matrix} d & \text{ if d is odd,}\\ 2d & \text{ if d is even.} \end{matrix} \right.
\end{equation}
The single-qudit Pauli group $\mc{P}_d$ is defined as the collection of operators $\widehat{\omega}^rX^aZ^b$, where $r \in \mbb{Z}_D$, and $a, b \in \mbb{Z}_d$.

Let $X^a Z^b$ and $X^{a^\p}Z^{b^\p}$ be two operators in $\mc{P}_d$. It is straightforward to see that they have the following commutation relation: 
\begin{equation}\label{Eq:commute}
(X^a Z^b)(X^{a^{\p}}Z^{b^{\p}}) = \omega^{ab^{\p} - ba^{\p}}(X^{a^{\p}}Z^{b^{\p}})(X^a Z^b).
\end{equation}

The Pauli group on an $n$-qudit system is defined as the $n$-fold tensor product of $\mc{P}_d$, denoted $\mc{P}_d^{(n)}$. Ignoring global phase, a typical operator in $\mc{P}_d^{(n)}$ has the form $X^{\textbf{a}}Z^{\textbf{b}} := X^{a_1}Z^{b_1} \otimes X^{a_2}Z^{b_2} \otimes \cdots \otimes X^{a_n}Z^{b_n}$, where $\textbf{a} = (a_1, a_2, \dots, a_n)$ and $\textbf{b} = (b_1, b_2, \dots, b_n)$. The commutation relation of two operators $X^{\textbf{a}}Z^{\textbf{b}}$ and $X^{\textbf{a}^\p}Z^{\textbf{b}^\p}$ in $\mc{P}_d^{(n)}$ is given by
\begin{equation}\label{Eq:commute2}
(X^{\textbf{a}}Z^{\textbf{b}})(X^{\textbf{a}^\p}Z^{\textbf{b}^\p}) = \omega^{(\sum_{i=1}^n a_i b_i^\p - a_i^\p b_i)}(X^{\textbf{a}^\p}Z^{\textbf{b}^\p})(X^{\textbf{a}}Z^{\textbf{b}}).
\end{equation}
This is the natural generalization of \eref{Eq:commute} to the $n$-qudit case.

This relationship gives rise to a classical representation of the Pauli group. Observe that the center of $\mc{P}_d^{(n)}$ is given by $C(\mc{P}_d^{(n)}) = \{\widehat{\omega}^c I \ | \ c \in \mbb{Z}_D \}$, where $I$ is the $n$-fold tensor product of identity.  Since these elements describe the global phase actions on states, we need only consider elements of the quotient group $\mc{P}_d^{(n)} / C(\mc{P}_d^{(n)})$. Elements of this group are equivalence classes containing an operator $X^{\textbf{a}} Z^{\textbf{b}}$ in $\mc{P}_d^{(n)}$ along with all of its complex scalar multiples in $\mc{P}_d^{(n)}$. With a slight abuse of notation, we will label each equivalence class $\{ \widehat{\omega}^c X^{\textbf{a}} Z^{\textbf{b}} \ | \ c \in \mbb{Z}_D \}$ with the scalar-free element $X^{\textbf{a}} Z^{\textbf{b}}$.  While any two elements $X^{\textbf{a}} Z^{\textbf{b}}$ and $X^{\textbf{a}^\p}Z^{\textbf{b}^\p}$ in the quotient group $\mc{P}_d^{(n)} / C(\mc{P}_d^{(n)})$ will always commute, we can still determine whether two elements from their respective preimages in $\mc{P}_d^{(n)}$ will commute by the commutation relation \eref{Eq:commute2}.

The group $\mc{P}_d^{(n)}/C(\mc{P}_d^{(n)})$ is group-isomorphic to the $2n$-dimensional commutative ring module $M_{\mc{R}} = \mbb{Z}_d \times \mbb{Z}_d \times \cdots \times \mbb{Z}_d$ via the map $X^{\textbf{a}}Z^{\textbf{b}} \mapsto (\textbf{a} , \textbf{b})^T = (a_1, a_2, \dots, a_n, b_1, b_2, \dots, b_n)^T$, where multiplication in $\mc{P}_d^{(n)}/C(\mc{P}_d^{(n)})$ becomes addition in $M_{\mc{R}}$. The additional scalar multiplication in the module arises via $(X^{\textbf{a}}Z^{\textbf{b}})^r \mapsto r(\textbf{a} , \textbf{b})^T$. Here we must once again discuss subtle differences between the even and odd case. In the odd case, the ring multiplication is over the integers modulo $d$, $\mbb{Z}_d$. In the even case, however, the ring multiplication is over the integers modulo $2d$, $\mbb{Z}_{2d}$, for reasons discussed earlier. Thus, we write $\mc{R} = \mbb{Z}_D$, where $D$ is defined in \eref{Eq:D}.

By \eref{Eq:commute2}, we preserve the commutative properties of elements in $\mc{P}_d^{(n)}$ by imposing a \emph{symplectic inner product} (SIP) $*$ on the module, where
\begin{equation}\label{Eq:SIP}
(\textbf{a} , \textbf{b})^T * (\textbf{a}^\p , \textbf{b}^\p)^T = \sum_{i=1}^n a_i b_i^\p - a_i^\p b_i \pmod{d}.
\end{equation}
The SIP is a non-degenerate skew-symmetric bilinear form (i.e. a symplectic form) over the ring module, and hence we call the $2n$-dimensional $\mbb{Z}_D$-module with SIP a \emph{symplectic module}. This symplectic module is the \emph{classical representation} of the Pauli group $\mc{P}_d^{(n)}$.

Throughout the rest of this paper, we will omit the superscript $(n)$ and simply write $\mc{P}_d$ to denote the $n$-qudit Pauli group, use $M_{\zd}$ to denote the corresponding symplectic module, and state explicitly when referring to the single-qudit case.

\section{Clifford Operators and Symplectic Form\label{Sec:SymplecticForm}}
The qudit \emph{Clifford group} $\mc{C}$ is defined as the collection of unitary operators that map the Pauli group to itself under conjugation, that is, it is the unitary \emph{normalizer} of the Pauli group. Since the normalizer elements act as automorphisms of the Pauli group, it follows that each Clifford operator has a \emph{classical representation} as a linear operator over $\mbb{Z}_D$ acting on $M_{\mbb{Z}_D}$. While the subset of linear operators over $\mbb{Z}_D$ acting on $M_{\mbb{Z}_D}$ corresponding to Clifford transformations is generally a proper subset, we observe that all transformations of Pauli operators via unitary conjugation preserve the symplectic form. That is, suppose $Q \in U(d)$ is a unitary operator and $X^{\textbf{a}}Z^{\textbf{b}}$, $X^{\textbf{a}^\p}Z^{\textbf{b}^\p} \in \mc{P}_d$. Then
\begin{eqnarray}\label{Eq:CliffordPreservesSIP}
 \ &(Q(X^{\textbf{a}}Z^{\textbf{b}})Q^\dag) (Q (X^{\textbf{a}^\p}Z^{\textbf{b}^\p}) Q^\dag) \\
 &= Q(X^{\textbf{a}}Z^{\textbf{b}})(X^{\textbf{a}^\p}Z^{\textbf{b}^\p}) Q^\dag \\
 &= \omega^{\left((\textbf{a} , \textbf{b})^T * (\textbf{a}^\p , \textbf{b}^\p)^T\right)}Q(X^{\textbf{a}^\p}Z^{\textbf{b}^\p})(X^{\textbf{a}}Z^{\textbf{b}})Q^\dag\\
 &= \omega^{\left((\textbf{a} , \textbf{b})^T * (\textbf{a}^\p , \textbf{b}^\p)^T\right)}(Q(X^{\textbf{a}^\p}Z^{\textbf{b}^\p})Q^\dag) (Q(X^{\textbf{a}}Z^{\textbf{b}})Q^\dag).
\end{eqnarray}
It follows that the Clifford operators must be classically represented as $2n \times 2n$ matrices over $\mbb{Z}_D$ acting on $M_\zd$ that preserve the symplectic form. Such matrices $N$ are called \emph{symplectic matrices}, and they satisfy $N^TSN = S$, where
\begin{equation}\label{Eq:S}
S = \begin{bmatrix} 0&I_n\\ -I_n&0\end{bmatrix}
\end{equation}
is determined by the symplectic form, namely,
\begin{equation}
(\textbf{a}, \textbf{b})^T * (\textbf{a}^\p, \textbf{b}^\p)^T = (\textbf{a}, \textbf{b}) S (\textbf{a}^\p, \textbf{b}^\p)^T.
\end{equation}
It is straightforward to see that a $2\times 2$ matrix $N$ with entries in $\mbb{Z}_D$ is symplectic if and only if $\det(N) = 1 \pmod{D}$.

In \cite{Appleby05-JMathPhys}, Appleby shows that preserving the symplectic form is also a sufficient condition for the description of Clifford operators in the single-qudit case, and Hostens \emph{et. al.} \cite{HostDehMoor05} proved the sufficiency in the multi-qudit case, using a slightly different classical construction and large classes of gates. In neither of these references is a basis developed for the Clifford group. The proofs for the basis developed in this paper are constructive, giving rise to an algorithm to implement any Clifford transformation using only compositions of three basis gates.

\section{Building the Clifford Group from a Generating Set\label{Sec:BuildClifford}}
Gottesman \cite{Gottesman98-arXiv} presented a sufficient set of gates to generate the Clifford group when the Hilbert space has odd prime dimension $d$. In what follows, we constructively prove that a Clifford basis can be built in arbitrary dimension from a smaller subset of those gates. We do this by first constructing the single-qudit Clifford group by explicitly building any arbitrary $2 \times 2$ symplectic operator using sequences of two gates, namely the discrete Quantum Fourier Transform (QFT) and Phase-shift gates. We then show that the addition of the two-qudit SUM gate acting on pairs of qudits generates the entire $n$-qudit Clifford group.

Throughout this paper, we rely heavily on the classical representation of Clifford operators as elements of a symplectic group to reveal information about the actions of these operators on quantum states. Because any unitary Clifford operator can be uniquely (up to global phase) represented by a symplectic matrix, we take measures in this paper to specifically identify which representation is being referred to. When discussing Clifford gates and Clifford operators outside of a particular mathematical representation, we identify them by their names (i.e. SUM, QFT, Phase-shift, etc.). When specifically referring to the classical representation, these operators are identified by a single capital letter. When a line is placed above one of these letters, we are referring to (up to global phase) the unitary representation of the corresponding Clifford operator.

\subsection{The Single-Qudit Clifford Group}

The QFT and Phase-shift gates can be described (up to global phase) by their action on $X$ and $Z$ under conjugation. The QFT maps
\begin{eqnarray}\label{Eq:QFT-map}
X &\mapsto Z \\
Z &\mapsto X^{-1} \label{Eq:QFT-map2}
\end{eqnarray}
and the Phase-shift maps
\begin{eqnarray}\label{Eq:Phase-map}
X &\mapsto XZ\\
Z & \mapsto Z.\label{Eq:Phase-map2}
\end{eqnarray}
Thus, we can classically represent the QFT gate with the symplectic matrix
\begin{equation}
R = \begin{bmatrix}0&-1\\1&0\end{bmatrix}
\end{equation}
and the Phase-shift gate with the symplectic matrix
\begin{equation}
P = \begin{bmatrix}1&0\\1&1\end{bmatrix},
\end{equation}
where the matrix entries are taken modulo $D$.
\begin{proposition}\label{Prop:LC-group}
The Phase-shift and QFT gates are a necessary and sufficient set of gates to generate (up to global phase) the entire single-qudit Clifford group in any finite dimension.
\end{proposition}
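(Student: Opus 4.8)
The plan is to move to the classical representation and prove the statement for $2\times2$ symplectic matrices over $\zd$. The excerpt already shows that a $2\times2$ matrix over $\zd$ is symplectic exactly when its determinant is $1$, so the single-qudit Clifford group modulo global phase is precisely $SL_2(\zd)$ (that every such symplectic matrix is realized by a Clifford operator is Appleby's sufficiency result, cited above). Writing $\la R,P\ra$ for the subgroup generated by the QFT and Phase-shift matrices, sufficiency amounts to $\la R,P\ra=SL_2(\zd)$, while necessity amounts to checking that neither $\la R\ra$ nor $\la P\ra$ is all of $SL_2(\zd)$.

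For sufficiency I would give a constructive Euclidean-type reduction. First I would note that $\la R,P\ra$ contains both families of elementary transvections: the powers $P^k=\begin{bmatrix}1&0\\k&1\end{bmatrix}$, and, since $RPR^{-1}=\begin{bmatrix}1&-1\\0&1\end{bmatrix}$, also all $\begin{bmatrix}1&k\\0&1\end{bmatrix}$. Left multiplication by these realizes the two elementary row operations, and $R$ interchanges the rows up to sign. Given $N=\begin{bmatrix}a&b\\c&d\end{bmatrix}$ with $ad-bc=1$, the identity $1=ad-bc$ shows $(a,c)=\zd$ as an ideal, so no prime dividing $D$ divides both $a$ and $c$. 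A prime-avoidance (Chinese Remainder) argument then produces $\lambda$ for which $a+\lambda c$ is a unit of $\zd$; applying the corresponding upper transvection makes the $(1,1)$ entry a unit $u$, and a lower transvection clears the $(2,1)$ entry. The matrix is now upper triangular, hence of the form $\begin{bmatrix}u&*\\0&u^{-1}\end{bmatrix}$; clearing the $(1,2)$ entry with one more upper transvection leaves the diagonal element $\begin{bmatrix}u&0\\0&u^{-1}\end{bmatrix}$, which is itself a word in the generators via
\[ \begin{bmatrix}u&0\\0&u^{-1}\end{bmatrix}=\begin{bmatrix}1&u\\0&1\end{bmatrix}\begin{bmatrix}1&0\\-u^{-1}&1\end{bmatrix}\begin{bmatrix}1&u\\0&1\end{bmatrix}R. \]
Reading the reduction in reverse expresses $N$ explicitly as a product of $R$'s and $P$'s, which is the advertised algorithm.

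For necessity I would exhibit, for $d\ge2$, an element missing from each singleton-generated subgroup. Since $R^2=-I\neq I$, the subgroup $\la R\ra=\{I,R,-I,-R\}$ has order four and does not contain $P$. Since $P^k=\begin{bmatrix}1&0\\k&1\end{bmatrix}$, the subgroup $\la P\ra$ is cyclic of order $D$ and consists entirely of lower-triangular matrices, so it does not contain $R$. Hence both $\la R\ra$ and $\la P\ra$ (and trivially the empty generating set) are proper, so no proper subset of $\{R,P\}$ generates $SL_2(\zd)$, and the pair is minimal.

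The hard part is the first-column reduction over $\zd$ when $D$ is composite. In the odd prime dimension case, where $\zd=\mbb{Z}_d$ is a field, the pivot $a$ can simply be inverted, but in general $a$ may be a zero divisor, so the argument hinges on the prime-avoidance step that replaces $a$ by a unit $a+\lambda c$ and on checking that the diagonal element $\begin{bmatrix}u&0\\0&u^{-1}\end{bmatrix}$ is genuinely a product of elementary transvections over $\zd$. Once these two ring-theoretic points are in hand the remainder is formal bookkeeping. A less constructive alternative would invoke $SL_2(\mbb{Z})=\la S,T\ra$ together with surjectivity of the reduction $SL_2(\mbb{Z})\tra SL_2(\zd)$, but this conceals exactly the same arithmetic inside the surjectivity claim and loses the explicit word in $R$ and $P$.
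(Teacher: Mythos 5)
Your proof is correct, and it reaches the result by a genuinely different route from the paper's. The paper argues in two cases: when some entry of $M$ is a unit it exhibits the closed-form word $M=P^{m}RP^{q}RP^{n}$ (conjugating by $R$ as needed to move the unit into the top-right slot), and when no entry is a unit it first runs the Euclidean algorithm on a column, implemented with $P$ and $R$ exactly as in the PEG Lemma, to manufacture a unit entry before falling back on the first case. You instead give one uniform Gaussian elimination by transvections: the unit pivot is produced in a single step by prime avoidance and the Chinese Remainder Theorem rather than by $\mc{O}(\ln D)$ Euclidean iterations, and the residual $\bigl[\begin{smallmatrix}u&0\\0&u^{-1}\end{smallmatrix}\bigr]$ is absorbed by the Whitehead-type identity you verify (which checks out, as do $RPR^{-1}=\bigl[\begin{smallmatrix}1&-1\\0&1\end{smallmatrix}\bigr]$ and the prime-avoidance step, since $ad-bc=1$ forces $a$ and $c$ to generate the unit ideal of $\zd$). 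The trade-offs: your argument is the standard proof that $SL_2$ of a finite semilocal ring is elementarily generated, it yields a word with only a bounded number of transvections --- hence $\mc{O}(D)$ rather than $\mc{O}(D\ln D)$ gates in the paper's counting convention --- and your necessity argument is sharper than the paper's one-line assertion, since you actually compute $\la R\ra$ and $\la P\ra$. On the other hand, computing your $\lambda$ requires the prime factorization of $D$, whereas the paper's Euclidean reduction is factorization-free, and the PEG machinery the paper develops here is reused verbatim in the multi-qudit induction and in both appendices, so the authors get a second payoff from their choice that your shortcut would forgo.
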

Before proving this, we must first discuss what is known as the Pauli-Euclid-Gottesman (PEG) Lemma \cite{Gottesman98-arXiv, NielsenBremnerDoddChilds02-PhysRevA}.

\begin{lemma}[PEG Lemma \cite{Gottesman98-arXiv, NielsenBremnerDoddChilds02-PhysRevA}]\label{Lem:PEG}
For any dimension $d$ and for integers $0 \leq j, k \leq d-1$, there exists a Clifford operator mapping $X^jZ^k$ to $Z^{\gcd(j,k)}$.
\end{lemma}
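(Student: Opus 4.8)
The plan is to realize the classical Euclidean algorithm through the symplectic matrices $P$ and $R$ acting on the classical representation $(j,k)^T$ of $X^jZ^k$, steering it to $(0,\gcd(j,k))^T$, which is exactly the representation of $Z^{\gcd(j,k)}$. First I would record the two elementary moves on a generic vector. Since $P^m = \begin{bmatrix}1&0\\m&1\end{bmatrix}$, we have $P^m(a,b)^T = (a,\,ma+b)^T$, so $P$ adds integer multiples of the first coordinate to the second; and $R(a,b)^T = (-b,a)^T$, so $R$ interchanges the two coordinates up to sign. With these two operations available I can subtract any multiple of one exponent from the other and swap their roles, which are precisely the steps of the Euclidean algorithm.

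With the moves in hand, I would run the algorithm on honest integer representatives $0 \le j,k \le d-1$. If $j=0$ the vector is already $(0,\gcd(0,k))^T$ and the identity suffices; if $k=0$ a single $R$ sends $(j,0)^T \mapsto (0,j)^T$. Otherwise I iterate: taking $a$ to be the smaller nonzero coordinate, I apply a suitable power $P^{-q}$ to replace the larger coordinate by its residue modulo $a$, then apply $R$ to interchange the coordinates, so that the pair strictly decreases just as in the Euclidean recursion. Because the values decrease at each stage the process terminates with one coordinate equal to $0$ and the other equal to $\pm\gcd(j,k)$; a final $R$ together with, if needed, $R^2 = -I$ to correct the sign puts the result in the form $(0,\gcd(j,k))^T$.

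To package this, I note that the composite $M$ of all the $R$'s and $P$'s used is an integer matrix of determinant one, being a product of determinant-one factors, and hence remains symplectic after reduction mod $D$ (for $2\times 2$ matrices the symplectic condition is exactly $\det = 1$, as observed in the text). Since $R$ and $P$ are the classical representations of the QFT and Phase-shift gates, and products of Clifford operators are Clifford, $M$ is the classical representation of a genuine Clifford operator, which therefore maps $X^jZ^k$ to $Z^{\gcd(j,k)}$. Equivalently, one may simply invoke that $R$ and $P$ generate $SL_2(\mathbb{Z})$ and that every integer vector is carried to $(0,\gcd)^T$ by some element of $SL_2(\mathbb{Z})$.

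The main point requiring care is the interplay between the two moduli: the exponents live in $\mathbb{Z}_d$ while the matrix entries are taken mod $D$. I would therefore carry out all reductions on integer representatives and verify that every intermediate remainder stays in $[0,d-1]$ — the Euclidean remainders are strictly decreasing and bounded by $\max(j,k)\le d-1$ — so that the terminal entry is genuinely the integer $\gcd(j,k)$ and not merely some residue congruent to it. Keeping the signs introduced by the $R$-swaps consistent is the only other bookkeeping obstacle, and it is disposed of by the final $R$/$R^2$ adjustment.
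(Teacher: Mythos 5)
Your proposal is correct and follows essentially the same route as the paper: it realizes the Euclidean algorithm on the classical vector $(j,k)^T$ via the moves $P^m(a,b)^T = (a, ma+b)^T$ and $R(a,b)^T = (-b,a)^T$, exactly as in the paper's sketch and its Appendix A elaboration. Your added care about sign corrections via $R^2 = -I$ and about keeping integer representatives in $[0,d-1]$ is consistent with (and slightly more explicit than) the paper's treatment.
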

This result, first alluded to in \cite{Gottesman98-arXiv} and formally proved in \cite{NielsenBremnerDoddChilds02-PhysRevA}, is accomplished by using the operators $P$ and $R$ to implement Euclid's factorization algorithm, mapping the symplectic vector $( a \ , \ b)^T$ to $(0 \ , \ \gcd(a,b))^T$. It is easy to see how this is done by observing that by appropriate applications of $P$ and $R$, the vector $(a \ , \ b)^T$ can be mapped to $(a \ , \ b+ ma)^T$ or $(a + mb \ , \ b)^T$ for any integer $m$. The appropriate choice of the value $m$ is determined at each step of the Euclidean algorithm. Note that in general, the exponents of $X$ and $Z$ will be in the range $\{0, 1, \dots, d-1\}$. However, $\gcd(0,m)$ is undefined for any nonzero integer $m$. Thus, in order for the algorithm to be defined over all values $0 \leq m \leq d-1$, we define $\gcd(0, m) = \gcd(m, 0) = m$. This definition is appropriate, as $m = k0 + m$ for any $k$, and hence, by Euclid's factorization algorithm, $\gcd(0,m) = \gcd(m,m) = m$.

\begin{proof}[Proof of Proposition \ref{Prop:LC-group}] Since (up to global phase) every single-qudit Clifford operator is uniquely represented by a $2 \times 2$ symplectic matrix, we may prove sufficiency by explicitly generating any arbitrary $2 \times 2$ symplectic matrix $M$ using only these two gates. Suppose the Clifford operator we want to build has a classical representation given by
\begin{equation}
M = \begin{bmatrix}p&q\\r&s\end{bmatrix},
\end{equation}
where the entries are over $\mbb{Z}_D$. Since $M$ is symplectic by assumption, we know that $\det(M) = 1$. Now we consider two distinct cases, dependent on the invertibility of the entries in $M$.
\begin{theoremcase}\label{Case:q-invert}
Suppose $q$ is invertible. Then it is easily verified that $M = P^mRP^qRP^n$, where $m = q^{-1}(s+1)$ and $n = q^{-1}(p+1)$. Furthermore, if any entry in $M$ is invertible, then $M$ can be decomposed in a like manner, with appropriate additional applications of $R$.
\end{theoremcase}

What is meant by ``appropriate additional applications of $R$'' is decribed by the following. Suppose, for example, that $q$ is not invertible, but $r$ is. Then $M^* = RMR$ has $r$ in the top right position, and can be decomposed accordingly. The original $M$ is obtained by observing that $M = RM^*R$. Thus, as long as there is at least one invertible element in a $2 \times 2$ symplectic operator $M$, it can be easily decomposed into a product of $P$ and $R$.

\begin{theoremcase}\label{Case:no-invert}
Suppose no entries in $M$ are invertible. Since $M$ is invertible ($M^{-1} = -SM^TS$, where $S$ is the symplectic form matrix from \eref{Eq:S}), it follows that for any column $M_{.,i}$ in $M$, $\gcd(M_{.,i})$ is invertible. Thus, we can use the algorithm described in the PEG Lemma to map $M$ to some $M^\p$ having an invertible element in one of the columns. We build $M^\p$ as in Case \ref{Case:q-invert} and then use inverse operations to obtain $M$.
\end{theoremcase}
Thus, we conclude that QFT and Phase-shift are a sufficient set of gates to generate any single-qudit Clifford group in any finite dimension. The proof of necessity is straightforward. The matrix $R$ cannot generate the matrix $P$; it follows that the Phase-shift gate cannot be obtained from the QFT. Likewise, the QFT cannot be obtained from the Phase-shift gate, and hence, neither one can generate the entire Clifford group on its own. Thus both are necessary, completing the proof.
\end{proof}

A more detailed overview of how to apply the \emph{PEG Algorithm} described in Case \ref{Case:no-invert} to a particular column of a symplectic matrix is described in \ref{PEG-Algorithm}, along with a simple example. Note that, while in general, $\mc{O}(D \ln{D})$ gates are needed to implement a Clifford operator in a $d$-dimensional Hilbert space, in the case of prime dimension, only a linear number of gates are needed, as only Case \ref{Case:q-invert} applies.

Because qudit Clifford transformations are automorphisms on the qudit Pauli group, it follows that, at a minimum, if a qudit Clifford transformation mapping $Z^i$ to $Z^j$ exists, then $\gcd(i,d) = \gcd(j,d)$. Suppose $\gcd(i,d) = \gcd(j,d) = k$ for some $1 \leq k \leq d$. Then $i = pk$ and $j = qk$ for some $p$ and $q$ both relatively prime to $d$. Since $p$ and $q$ are then elements of $\mbb{Z}_d^*$, the (cyclic) multiplicative group of elements relatively prime to $d$ (modulo d), it follows that $q = p^r \pmod{d}$ for some $1 \leq r \leq d-1$, and hence $j = p^{r-1}i \pmod{d}$. Conversely, if $j = ki \pmod{d}$ for some $k$ relatively prime to $d$, then there exists a Clifford transformation mapping $Z^i$ to $Z^j$ up to some global phase, whose symplectic representation is given by
\begin{equation}\label{Eq:S_k}
S(k) = \begin{bmatrix}k^{-1} & 0\\ 0 & k\end{bmatrix} = RP^{(k^{-1})}RP^kRP^{(k^{-1})}.
\end{equation}
Combining this result with Lemma \ref{Lem:PEG} gives the following corollary.
\begin{corollary}\label{Cor:S_k}
For any dimension $d$ and for integers $0 \leq a_1, b_1, a_2, b_2 \leq d-1$, there exists a Clifford operator mapping $X^{a_1}Z^{b_1}$ to $X^{a_2}Z^{b_2}$ if and only if $\gcd(a_1, b_1) = k\gcd(a_2, b_2)$ for some $k$ relatively prime to $d$.
\end{corollary}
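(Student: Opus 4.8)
The plan is to reduce the general statement to the special case of relating one power of $Z$ to another, which is exactly what the discussion preceding the corollary (together with the matrix $S(k)$ of \eref{Eq:S_k}) already settles. First I would set $g_1 = \gcd(a_1,b_1)$ and $g_2 = \gcd(a_2,b_2)$; since $0 \le a_i, b_i \le d-1$ these lie in $\{0,\dots,d-1\}$, so $Z^{g_1}$ and $Z^{g_2}$ are well-defined powers of $Z$. Using the PEG Lemma (Lemma \ref{Lem:PEG}) I would fix Clifford operators $C_1, C_2$ with $C_1 : X^{a_1}Z^{b_1} \mapsto Z^{g_1}$ and $C_2 : X^{a_2}Z^{b_2} \mapsto Z^{g_2}$; since the Clifford operators form a group, $C_1^{-1}$ and $C_2^{-1}$ reverse these maps and are themselves Clifford.

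For sufficiency, assuming $g_1 = k g_2 \pmod d$ with $\gcd(k,d)=1$, I would note $g_2 = k^{-1} g_1 \pmod d$, so $S(k^{-1})$ from \eref{Eq:S_k} sends $Z^{g_1} \mapsto Z^{g_2}$. Then the composite $C_2^{-1}\,S(k^{-1})\,C_1$ is a Clifford operator realizing $X^{a_1}Z^{b_1} \mapsto Z^{g_1} \mapsto Z^{g_2} \mapsto X^{a_2}Z^{b_2}$, all maps understood up to global phase, which composition preserves.

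For necessity, given a Clifford $C$ with $C : X^{a_1}Z^{b_1} \mapsto X^{a_2}Z^{b_2}$, I would form $C_2\,C\,C_1^{-1}$, a Clifford sending $Z^{g_1} \mapsto Z^{g_2}$. By the $Z$-to-$Z$ characterization established just before the corollary, this forces $g_2 = k' g_1 \pmod d$ for a unit $k'$, so $g_1 = (k')^{-1} g_2 \pmod d$ gives the required relation with $k = (k')^{-1}$.

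The bookkeeping with inverses and the fact that each $Z^{g_i}$ is well-defined are routine; the step that deserves care is the necessity direction. I expect the main obstacle to be the underlying equivalence it rests on: a Clifford can relate $Z^{g_1}$ to $Z^{g_2}$ only if these have equal order in the Pauli group modulo phase, i.e. $\gcd(g_1,d) = \gcd(g_2,d)$, and conversely this common-divisor condition must be shown equivalent to $g_1$ and $g_2$ differing by a unit mod $d$. The latter amounts to surjectivity of the reduction map $(\mbb{Z}/d)^* \to (\mbb{Z}/m)^*$ for $m \mid d$, which lets one lift a unit mod $m = d/\gcd(g_1,d)$ to a genuine unit mod $d$; this is precisely the content folded into the preamble of the corollary, and it is what I would lean on rather than re-derive.
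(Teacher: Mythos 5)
Your proposal is correct and follows exactly the route the paper intends: the paper states the corollary as an immediate combination of the PEG Lemma with the preceding $Z^i\mapsto Z^j$ characterization via $S(k)$, and your argument is simply a careful, explicit writing-out of that conjugation-by-$C_1,C_2$ reduction in both directions.
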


\subsection{Multi-Qudit Clifford Transformations}

Before generalizing the above results to the multi-qudit case, we introduce a multi-qudit gate called the SUM gate, which is the natural generalization of the CNOT gate to arbitrary $d$-dimensional quantum systems. It is a well-known non-local Clifford operator that maps
\begin{eqnarray}\label{Eq:SUMmap}
X \otimes I &\mapsto X\otimes X,\\
I \otimes X & \mapsto I \otimes X,\\
Z \otimes I & \mapsto Z \otimes I,\\
I \otimes Z & \mapsto Z^{-1} \otimes Z,\label{Eq:SUMmap4}
\end{eqnarray}
 and is classically represented by the following symplectic operator:
\begin{equation}\label{Eq:SUMgate}
C = \begin{bmatrix} 1&0&0&0\\
                      1&1&0&0\\
                      0&0&1&-1\\
                      0&0&0&1 \end{bmatrix}.
\end{equation}

In an effort to help the reader see that this is indeed the classical representation of the SUM gate, we recall here that the classical representation of the multi-qudit Pauli operators is a vector that first lists the $X$ exponents, then the $Z$ exponents in the tensor product. That is, $X \otimes X \mapsto (1, 1, 0, 0)^T$, $X \otimes Z \mapsto (1, 0, 0, 1)^T$, etc. Thus, for example, since the Clifford operator maps $X \otimes I \mapsto X \otimes X$, the classical representation must be a symplectic operator that maps $(1, 0, 0, 0)^T \mapsto (1, 1, 0, 0)^T$, etc.

Note that the transpose, $C^T$, corresponds to the SUM gate in which the control and target qudit are switched. For purposes of notational clarity, the SUM gate in which qudit $i$ is the control and qudit $j$ is the target will be denoted by $C_{[i,j]}$. Thus $C_{[j,i]} = C_{[i,j]}^T$. In \ref{Appendix:PEG-SUM}, we show that, by using sequences of $C_{[1,2]}$ and $C_{[2,1]}$, we can use an approach similar to that described in Lemma \ref{Lem:PEG} to map $(0,0,a,b)^T$ to either $(0,0, \gcd(a,b), 0)^T$ or $(0,0,0, \gcd(a,b))^T$ for any pair $0 \leq a,b \leq d-1$. Thus, we have the following result.

\begin{lemma}\label{Lem:SUM-adapted_PEG}
For any dimension $d$ and for integers $0 \leq a, b \leq d-1$, there exists a Clifford operator mapping $Z^a \otimes Z^b$ to $I \otimes Z^{\gcd(a, b)}$.
\end{lemma}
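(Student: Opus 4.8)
The plan is to work entirely in the classical symplectic representation and to show that the two SUM gates $C_{[1,2]}$ and $C_{[2,1]} = C^T$ implement the Euclidean algorithm on the $Z$-exponents in exactly the way $P$ and $R$ did for a single qudit in the PEG Lemma. First I would record that $Z^a \otimes Z^b$ corresponds to the vector $(0,0,a,b)^T$ and that the desired target $I \otimes Z^{\gcd(a,b)}$ corresponds to $(0,0,0,\gcd(a,b))^T$. The crucial structural observation is that $C$ (and hence $C^T$) is block diagonal with respect to the splitting of a coordinate vector into its $X$-exponents $(a_1,a_2)$ and its $Z$-exponents $(b_1,b_2)$: the upper-left block acts only on the $X$-part and the lower-right block only on the $Z$-part, with the off-diagonal blocks zero. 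Consequently any vector whose $X$-part vanishes keeps a zero $X$-part under arbitrary products of $C_{[1,2]}$ and $C_{[2,1]}$, so the first qudit automatically carries the identity and I need only track the $Z$-pair.

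Restricting to the $Z$-block, a direct multiplication gives $C_{[1,2]} : (a,b) \mapsto (a-b, b)$ and $C_{[2,1]} : (a,b) \mapsto (a, b-a)$, so that the powers $C_{[1,2]}^m$ and $C_{[2,1]}^m$ realize $(a,b) \mapsto (a-mb, b)$ and $(a,b) \mapsto (a, b-ma)$ for every integer $m$. These are precisely the subtractive steps of Euclid's algorithm (with arithmetic in $\mbb{Z}_D$), and they are the exact analogues of the $P, R$ moves used in Lemma \ref{Lem:PEG}. I would therefore run the same algorithm the PEG Lemma uses, selecting the multiplier $m$ at each stage, to reduce $(a,b)$ to $(\gcd(a,b), 0)$ or $(0, \gcd(a,b))$, adopting the convention $\gcd(0,m) = \gcd(m,0) = m$ introduced after Lemma \ref{Lem:PEG}. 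This is the content already asserted for \ref{Appendix:PEG-SUM}, which produces the gcd in one of the two coordinates.

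The one extra point to settle is that the statement demands the surviving exponent on the \emph{second} qudit. If the reduction terminates with $(\gcd(a,b), 0)$, I would apply the swap-type product $C_{[1,2]} C_{[2,1]}^{-1} C_{[1,2]}$, whose restriction to the $Z$-block is $\begin{bmatrix} 0 & -1 \\ 1 & 0 \end{bmatrix}$ and which therefore sends $(g,0)^T$ to $(0,g)^T$ while still fixing the trivial $X$-part. Composing this correction with the Euclidean reduction yields a symplectic matrix carrying $(0,0,a,b)^T$ to $(0,0,0,\gcd(a,b))^T$; since each factor is the classical representation of a SUM gate, the composite is again the classical representation of a Clifford operator, and lifting back through the correspondence of Section \ref{Sec:SymplecticForm} gives a Clifford operator mapping $Z^a \otimes Z^b$ to $I \otimes Z^{\gcd(a,b)}$. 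I expect no deep obstacle here; the work is bookkeeping — verifying the block-diagonal form so the $X$-exponents remain zero, handling the $\gcd(0,m) = m$ boundary case, and guaranteeing (via the swap correction) that the surviving exponent lands on the intended qudit.
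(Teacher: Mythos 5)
Your proof is correct and follows essentially the same route as the paper's Appendix~\ref{Appendix:PEG-SUM}: run the Euclidean algorithm on the $Z$-exponents via powers of $C_{[1,2]}$ and $C_{[2,1]}$ (whose block-diagonal form keeps the $X$-part zero), then apply a SUM-only correction if the $\gcd$ lands on the wrong qudit. The only cosmetic difference is your choice of swap correction $C_{[1,2]}C_{[2,1]}^{-1}C_{[1,2]}$ versus the paper's $C_{[2,1]}C_{[1,2]}^{D-1}$, which is immaterial.
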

Again, this is well-defined given our definition of $\gcd(0,m) = \gcd(m,0) = m$ for all integers $m$. It is straightforward to see that this result generalizes to the $n$-qudit Pauli operators as well. Hence, combining this with Lemma \ref{Lem:PEG} results in the following proposition.

\begin{proposition}\label{Prop:GeneralizedPEG}
For any dimension $d \geq 2$, any positive integer $n$, and integers $0 \leq a_i, b_j \leq d-1$, there exists a Clifford operator mapping $X^{a_1}Z^{b_1} \otimes \cdots \otimes X^{a_n}Z^{b_n}$ to $I^{\otimes n-1} \otimes Z^k$, where $k = \gcd(a_1, a_2, \dots, a_n, b_1, b_2, \dots, b_n)$.
\end{proposition}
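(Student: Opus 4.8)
The plan is to combine the two reduction lemmas already established, applying them in two stages and then invoking associativity of the gcd. Since the Clifford operators form a group (being the normalizer of the Pauli group), any composition of the Clifford maps produced below is again Clifford, so it suffices to exhibit a finite sequence of them achieving the stated reduction.

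First I would apply the single-qudit PEG Lemma (Lemma \ref{Lem:PEG}) to each qudit separately. For each $i$, that lemma supplies a single-qudit Clifford mapping $X^{a_i}Z^{b_i}$ to $Z^{g_i}$, where $g_i = \gcd(a_i, b_i)$. Taking the tensor product of these $n$ single-qudit operators yields a Clifford operator on the full register sending $X^{a_1}Z^{b_1} \otimes \cdots \otimes X^{a_n}Z^{b_n}$ to $Z^{g_1} \otimes Z^{g_2} \otimes \cdots \otimes Z^{g_n}$, so that every tensor factor is now a pure power of $Z$.

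Second, I would collapse this product of $Z$-powers onto the last qudit using Lemma \ref{Lem:SUM-adapted_PEG}, which (as the excerpt notes, generalizing the two-qudit statement to any pair inside the $n$-qudit register via the appropriate $C_{[i,j]}$ gates) maps $Z^a \otimes Z^b$ on a pair of qudits to $I \otimes Z^{\gcd(a,b)}$. Proceeding inductively, I apply it to qudits $1$ and $2$ to reach $I \otimes Z^{\gcd(g_1,g_2)} \otimes Z^{g_3} \otimes \cdots \otimes Z^{g_n}$, then to qudits $2$ and $3$, and so on. After $n-1$ such applications the first $n-1$ factors have all been cleared to identity and the last factor carries the running gcd, giving $I^{\otimes n-1} \otimes Z^{\gcd(g_1, \dots, g_n)}$.

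Finally I would identify the resulting exponent: by associativity of the gcd, $\gcd(g_1, \dots, g_n) = \gcd(\gcd(a_1,b_1), \dots, \gcd(a_n,b_n)) = \gcd(a_1, \dots, a_n, b_1, \dots, b_n) = k$, which is the claimed value. The main obstacle here is really only bookkeeping rather than any deep difficulty: one must verify that the $n$ single-qudit reductions legitimately compose into a single tensor-product Clifford, and that Lemma \ref{Lem:SUM-adapted_PEG} may be invoked on successive pairs embedded in the larger register (the gates $C_{[i,j]}$ acting as identity on the remaining qudits). Both of the genuinely nontrivial reductions—running Euclid's algorithm on a single symplectic vector, and on the $Z$-block of a pair—have already been discharged in the earlier lemmas, so no new hard computation is needed.
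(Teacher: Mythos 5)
Your proposal is correct and follows essentially the same route as the paper: apply the single-qudit PEG Lemma factorwise to reduce each tensor slot to a power of $Z$, then cascade the SUM-adapted PEG Lemma across successive pairs to accumulate the running gcd on the last qudit, concluding by associativity of the gcd. No substantive difference from the paper's argument.
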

In the above proposition, $I^{\otimes n-1}$ denotes the $(n-1)$-fold tensor product of identity. The above result is accomplished by first using the PEG algorithm to map each $X^{a_i}Z^{b_i}$ to $Z^{\gcd(a_i,b_i)}$, and then using the SUM-adapted PEG Algorithm to first map $Z^{\gcd(a_1,b_1)}\otimes Z^{\gcd(a_2,b_2)}$ to $I \otimes Z^{\gcd(a_1, a_2, b_1, b_2)}$, then applying it again to map $Z^{\gcd(a_1, a_2, b_1, b_2)} \otimes Z^{\gcd(a_3, b_3)}$ to $I \otimes Z^{\gcd(a_1, a_2, a_3, b_1, b_2, b_3)}$. Continuing in this fashion produces the final result. We call this algorithm the \emph{Generalized PEG Algorithm}, and it requires $\mc{O}(n)\mc{O}(D \ln(D))$ gates to implement.

Using this result, we can generalize Corollary \ref{Cor:S_k} to the multi-qudit case.

\begin{corollary}\label{Cor:S_k-Generalized}
For any dimension $d$ and for integers $0 \leq a_1, b_1, \dots, a_n, b_n, a_1^\p, b_1^\p, \dots, a_n^\p, b_n^\p \leq d-1$, there exists a Clifford operator mapping $X^{a_1}Z^{b_1}\otimes X^{a_2}Z^{b_2}\otimes \cdots \otimes X^{a_n}Z^{b_n}$ to $X^{a_1^\p}Z^{b_1^\p}\otimes X^{a_2^\p}Z^{b_2^\p}\otimes \cdots \otimes X^{a_n^\p}Z^{b_n^\p}$ if and only if $\gcd(a_1, b_1, \dots, a_n, b_n) = k\gcd(a_1^\p, b_1^\p, \dots, a_n^\p, b_n^\p)$ for some $k$ relatively prime to $d$. Moreover, such an operator can be constructed using $\mc{O}(n)\mc{O}(D \ln(D))$ gates.
\end{corollary}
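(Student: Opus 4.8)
The plan is to establish both implications by reducing to the single-qudit result of Corollary~\ref{Cor:S_k}, using the Generalized PEG Algorithm of Proposition~\ref{Prop:GeneralizedPEG} as the bridge. Write $g = \gcd(a_1, b_1, \dots, a_n, b_n)$ and $g^\p = \gcd(a_1^\p, b_1^\p, \dots, a_n^\p, b_n^\p)$ for the contents of the source and target vectors.

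For sufficiency, suppose $g = kg^\p$ for some $k$ relatively prime to $d$. By Proposition~\ref{Prop:GeneralizedPEG} there is a Clifford operator with classical representation $A_1$ carrying $X^{a_1}Z^{b_1} \otimes \cdots \otimes X^{a_n}Z^{b_n}$ to $I^{\otimes n-1} \otimes Z^{g}$, and a Clifford operator $A_2$ carrying $X^{a_1^\p}Z^{b_1^\p} \otimes \cdots \otimes X^{a_n^\p}Z^{b_n^\p}$ to $I^{\otimes n-1} \otimes Z^{g^\p}$. It then remains only to bridge these two normal forms on the last qudit. Since the operator $S(k^{-1})$ of \eref{Eq:S_k} maps $Z^g$ to $Z^{k^{-1}g} = Z^{g^\p}$, the operator $I^{\otimes n-1} \otimes S(k^{-1})$ carries $I^{\otimes n-1}\otimes Z^g$ to $I^{\otimes n-1}\otimes Z^{g^\p}$, so the composite $A_2^{-1}\,(I^{\otimes n-1}\otimes S(k^{-1}))\,A_1$ is the desired Clifford operator. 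For the complexity, each of $A_1$ and $A_2$ (hence $A_2^{-1}$) costs $\mc{O}(n)\mc{O}(D\ln D)$ gates by Proposition~\ref{Prop:GeneralizedPEG}, while $S(k^{-1})$ is a fixed word in $R$ and powers of $P$ costing $\mc{O}(D)$ gates; the total is therefore $\mc{O}(n)\mc{O}(D\ln D)$.

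For necessity, I would argue exactly as in the discussion preceding Corollary~\ref{Cor:S_k}. A Clifford operator acts as an automorphism of $\mc{P}_d/C(\mc{P}_d) \cong \mbb{Z}_d^{2n}$ and hence preserves the additive order of every element; as the order of $X^{\mathbf{a}}Z^{\mathbf{b}}$ in this group is $d/\gcd(a_1, \dots, a_n, b_1, \dots, b_n, d)$, the existence of a Clifford operator from source to target forces $\gcd(g,d) = \gcd(g^\p, d)$. Finally I would record the equivalence of this with the stated condition: $\gcd(g,d) = \gcd(g^\p,d)$ holds if and only if $g \equiv kg^\p \pmod d$ for some $k$ relatively prime to $d$. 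One direction is that multiplication by a unit preserves the gcd with $d$; for the converse, set $h = \gcd(g,d) = \gcd(g^\p,d)$, write $g = hs$ and $g^\p = hs^\p$ with $s, s^\p$ units modulo $d/h$, put $k \equiv s(s^\p)^{-1}\pmod{d/h}$, and lift $k$ to a unit of $\mbb{Z}_d$ using surjectivity of the reduction $\mbb{Z}_d^* \tra (\mbb{Z}_{d/h})^*$.

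I expect the only delicate point to be this closing number-theoretic equivalence, in particular the lift of $k$ from a unit modulo $d/h$ to a unit modulo $d$; the remainder is a routine assembly of the Generalized PEG normal form, the single-qudit scaling operator $S(k)$, and the order-preserving property of Pauli-group automorphisms.
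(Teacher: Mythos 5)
Your sufficiency argument is exactly the paper's meet-in-the-middle construction: two runs of the Generalized PEG Algorithm to reach the normal form $I^{\otimes n-1}\otimes Z^{g}$, bridged by $S(k)$ on the last qudit, i.e.\ the paper's $N^{-1}S(k)_{[n]}M$, with the same $\mc{O}(n)\mc{O}(D\ln D)$ gate count. The one place you go beyond the paper is the ``only if'' direction, which the paper leaves implicit for the multi-qudit case: your order-preservation argument together with the equivalence $\gcd(g,d)=\gcd(g^\p,d)\Leftrightarrow g\equiv kg^\p \pmod d$ for a unit $k$ is correct, and is in fact more careful than the paper's single-qudit discussion, which leans on $\mbb{Z}_d^*$ being cyclic (false for, e.g., $d=8$), whereas your choice $k=s(s^\p)^{-1}$ lifted along the surjection $\mbb{Z}_d^*\tra(\mbb{Z}_{d/h})^*$ avoids that issue.
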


The above mentioned operator can be implemented using a meet-in-the-middle approach to the generalized PEG algorithm. Namely, use generalized PEG to determine an operator $M$ constructed from a composition of gates that map $ P_1 = X^{a_1}Z^{b_1}\otimes \cdots \otimes X^{a_n}Z^{b_n}$ to $I^{\otimes n-1}\otimes Z^{\gcd(a_1, \dots, a_n, b_1, \dots, b_n)}$. Then we use the algorithm again to determine an operator $N$ constructed from a composition of gates that map $P_2 = X^{a_1^\p}Z^{b_1^\p} \otimes \cdots \otimes Z^{a_n^\p}Z^{b_n^\p}$ to $I^{\otimes n-1} \otimes Z^{\gcd(a_1^\p, \dots, a_n^\p, b_1^\p, \dots, b_n^\p)}$. Since the algorithm is reversible, we use the inverses of these gates to obtain an operator $N^{-1}$. Then since $S(k)$ applied to the last qudit has the effect of mapping $I^{\otimes n-1}\otimes Z^{\gcd(a_1, \dots, a_n, b_1, \dots, b_n)}$ to $I^{\otimes n-1}\otimes Z^{k\gcd(a_1, \dots, a_n, b_1, \dots, b_n)} = I^{\otimes n-1} \otimes Z^{\gcd(a_1^\p, \dots, a_n^\p, b_1^\p, \dots, b_n^\p)}$, it follows that $(N^{-1}S(k)_{[n]}M)P_1(N^{-1}S(k)_{[n]}M)^\dag = P_2$, where we use $S(k)_{[n]}$ to indicate that $S(k)$ is acting on the $n$-th qudit.

\subsection{The Multi-Qudit Clifford Group and Basis}

Before stating the main result, we first list the classical representations of various Clifford operators. The Phase-Shift gate acting on the $i$-th qudit of an $n$-qudit state is classically represented by the symplectic matrix
\begin{equation}\label{Eq:P_i}
P_{[i]} = \begin{bmatrix} I & 0_n\\ E_{i,i} & I \end{bmatrix},
\end{equation}
where $I$ is the $n \times n$ identity matrix, $0_n$ is the $n \times n$ all-zero matrix, and $E_{i,i}$ is a matrix of all zeros, except for a 1 in the $i$-th diagonal entry. Another operator of importance is given by the transpose of the above matrix, constructed by a particular product of QFT and Phase-Shift gates acting on the $i$-th qudit of an $n$-qudit state, namely,
\begin{equation}\label{Eq:P_i^T}
(P_{[i]})^T = R_{[i]}P^{-1}_{[i]}R_{[i]}^3 = \begin{bmatrix} I & E_{i,i}\\ 0_n & I \end{bmatrix}.
\end{equation}
Now suppose $1 \leq i < j \leq n$. Then the SUM gate acting on an $n$-qudit system using $i$ as the control and $j$ as the target qudit is classically represented by the symplectic matrix
\begin{equation}
C_{[i,j]} = \begin{bmatrix} E_{j,i} + I & 0_n\\ 0_n & I - E_{i,j} \end{bmatrix},
\end{equation}
where each $E_{p,q}$ is an $n \times n$ matrix of all-zeros except for a $1$ in the $(p,q)$-th entry. Note that, just as in the two-qudit case, the SUM gate in which the control and target qudits are reversed is given classically by the transpose, $C_{[j,i]} = C_{[i,j]}^T$. We are now ready to state and prove our main result.

\begin{theorem}\label{Thm:CliffordGroup}
In any dimension $d \geq 2$ and for any number $n$ of qudits, a necessary and sufficient set of gates to generate (up to global phase) the $n$-qudit Clifford group (i.e. a Clifford basis) is given by the discrete QFT and Phase-shift gates acting on individual qudits, and the SUM gate acting on pairs of qudits.
\begin{proof}
It is clear that, without the use of ancillas, no one of these gates can be constructed from the other two, proving the necessity. We have already shown how to construct the single-qudit Clifford group using QFT and Phase-shift gates. We will prove the $n$-qudit case by induction. Namely, we will use the QFT, Phase-shift, and SUM gates to map an arbitrary $n$-qudit Clifford operator to another $n$-qudit Clifford operator that acts as identity on the last qudit. Such an operator is equivalent to an $(n-1)$-qudit Clifford operator acting on the first $n-1$ qudits, and hence the results will follow by induction.

Let $N = \begin{bmatrix}J&K\\L&M\end{bmatrix}$ be an arbitrary $2n \times 2n$ symplectic matrix with entries over $\mbb{Z}_D$. Just as in the single qudit case, we can use the generalized PEG algorithm to map $N$ to another symplectic matrix in which the last column has all zeros except for the very bottom entry, given by the greatest common divisor of all of the entries in that column, labelled $k$. Because $N$ is invertible, it follows that $k$ is invertible. Thus, we can apply $S(k^{-1})_{[n]}$ on the left to map $k$ to 1. Note that, by symplecticity, it follows that $J_{n,n}$ is now 1 as well. We now wish to map each of the remaining entries in the bottom row of $N$ to 0.

Let $M_{n,i}$ and $L_{n,i}$ denote the $i$-th entry in the last row of $M$ and $L$, respectively. We map each $M_{n,i}$ for $1 \leq i <n$ to zero by applying $C_{[n,i]}^{M_{n,i}}$ to $N$ on the right. We map $L_{n,n}$ to zero by applying $P_{[n]}^{-L_{n,n}}$ to $N$ on the right. In order to map $L_{n,i}$ to zero for the remaining $1 \leq i < n$, we apply $R_{[i]}P_{[i]}R_{[i]}P_{[i]}R_{[i]}^2$ on the right of $N$. This effectively replaces the $i$-th column of $L$ with that of $M$, which has a zero in the last position. Unfortunately, this also causes each of the values in the bottom row of $M$ to become nonzero, so we have to repeat some steps to map those back to zero.

After these steps we have a symplectic matrix $N^\p$ in which the last column and bottom row are all zeros except for the last entry $N_{2n,2n}^\p$ being a 1. By symplecticity, it follows that the $n$-th column and $n$-th row are all zeros except for a 1 in position $N_{n,n}^\p$.
This corresponds to a Clifford operator that acts as identity on the last qudit. Hence, this corresponds to an $(n-1)$-qudit Clifford operator tensored with the single-qudit identity operator. By induction, the statement of the theorem is obtained.
\end{proof}
\end{theorem}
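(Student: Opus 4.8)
The plan is to prove necessity and sufficiency separately, with essentially all of the work in sufficiency. Necessity I would dispatch quickly: without ancillas the two local gates act as block transformations on each qudit and can never produce the entangling SUM gate $C$, and the reasoning in Proposition~\ref{Prop:LC-group} already shows that $R$ and $P$ generate genuinely different $2\times 2$ symplectic transformations, so neither local gate is obtainable from the other. Hence none of the three is redundant. For sufficiency I would induct on the number of qudits $n$. The base case $n=1$ is exactly Proposition~\ref{Prop:LC-group}. For the inductive step the goal is to take an arbitrary $2n\times 2n$ symplectic matrix $N=\begin{bmatrix}J&K\\L&M\end{bmatrix}$ over $\mbb{Z}_D$ and, by left- and right-multiplying by classical representations of the QFT, Phase-shift, and SUM gates, reduce it to a symplectic matrix acting as the identity on the $n$-th qudit; such a matrix restricts to a $2(n-1)\times 2(n-1)$ symplectic matrix on the remaining qudits, to which the inductive hypothesis applies.

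To begin the reduction I would decouple the image of $Z$ on the last qudit. The last column of $N$ records where $Z_n$ is sent, so by Proposition~\ref{Prop:GeneralizedPEG} (the Generalized PEG Algorithm) I can left-multiply by a product of the three gates to replace that column by $(0,\dots,0,k)^T$, where $k$ is the gcd of its entries. Since $N$ is invertible $k$ is a unit mod $D$, so applying $S(k^{-1})_{[n]}$ from \eref{Eq:S_k} turns the bottom entry into $1$, making the last column equal to $e_{2n}$. Symplecticity then immediately forces $J_{n,n}=1$.

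Next I would clear the remainder of the bottom row, working entirely with right-multiplications so as not to disturb the last column. Each $M_{n,i}$ with $i<n$ is killed by right-multiplying by $C_{[n,i]}^{M_{n,i}}$, which only alters column $i$ of the $K/M$ blocks and column $n$ of the $J/L$ blocks, and $L_{n,n}$ is killed by $P_{[n]}^{-L_{n,n}}$. The entries $L_{n,i}$ with $i<n$ are the delicate ones: here I would apply a product of QFT and Phase-shift gates acting on qudit $i$, namely $R_{[i]}P_{[i]}R_{[i]}P_{[i]}R_{[i]}^2$, to exchange column $i$ of $L$ with column $i$ of $M$, which already carries a zero in its last slot, thereby zeroing $L_{n,i}$. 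This swap reintroduces a nonzero $M_{n,i}$, so a second pass of the $C_{[n,i]}$ step is needed; crucially, that second pass edits only column $i$ of $M$ and column $n$ of $L$, so it leaves the now-zero $L_{n,i}$ untouched, and the process terminates after a bounded number of passes with the entire bottom row equal to $e_{2n}^T$.

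Finally I would invoke symplecticity to close the argument. Once the last column is $e_{2n}$, the preserved symplectic products of column $2n$ with every other column force the $n$-th row to be $e_n^T$; once the bottom row is $e_{2n}^T$, the analogous relations among rows (using that $N^T$ is symplectic as well) force the $n$-th column to be $e_n$. Hence rows and columns $n$ and $2n$ are all standard basis vectors, so $N$ has been reduced to a matrix acting as identity on the last qudit, and the induction goes through. I expect the main obstacle to be precisely the bottom-row clearing in the third step: one must order the SUM, Phase, and QFT operations carefully and verify that each leaves the previously cleared entries, and the last column, intact, so that the interdependence between the $L$ and $M$ rows resolves in finitely many passes rather than cycling indefinitely.
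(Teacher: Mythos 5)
Your proposal is correct and follows essentially the same route as the paper's own proof: induction on $n$ with Proposition~\ref{Prop:LC-group} as the base case, the Generalized PEG Algorithm plus $S(k^{-1})_{[n]}$ to reduce the last column to $e_{2n}$, the identical right-multiplications ($C_{[n,i]}^{M_{n,i}}$, $P_{[n]}^{-L_{n,n}}$, and $R_{[i]}P_{[i]}R_{[i]}P_{[i]}R_{[i]}^2$) to clear the bottom row, and symplecticity to force the $n$-th row and column. You are in fact somewhat more explicit than the paper about why the repeated passes terminate and which entries each operation disturbs, which is the one point the paper glosses over.
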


Note that it is implicit by the above proof that every symplectic matrix corresponds to a Clifford operator. The algorithm described in this proof uses $\mc{O}(n^2) \mc{O}(D \ln(D))$ gates to implement.

As an example of how we can generate multi-qudit Clifford gates from this finite set, we construct the two qudit SWAP gate. This gate performs the operation $|i\ra|j\ra \mapsto |j\ra|i\ra$, where $i,j \in \{0, 1, \dots, d-1\}$. When this gate acts via conjugation on the two-qudit Pauli group, it performs the operation $X^{a_1}Z^{b_1}\otimes X^{a_2}Z^{b_2} \mapsto X^{a_2}Z^{b_2}\otimes X^{a_1}Z^{b_1}$, and hence is classically represented as the symplectic matrix
\begin{equation}\label{Eq:SWAP}
S_{[i,j]} = \begin{bmatrix} 0&1&0&0\\1&0&0&0\\0&0&0&1\\0&0&1&0\end{bmatrix},
\end{equation}
where the $[i,j]$ are to indicate on which pair of qudits the operator is acting. This gate is implemented by performing sequences of SUM and local QFTs. Let $R_{[i]}$ denote the QFT acting on qudit $i$, and $R_{[i,j]}$ denote the QFT acting transversally on qudits $i$ and $j$, so that $R_{[i,j]} = R_{[i]}R_{[j]} = R_{[j,i]}$. Then the SWAP gate can be classically decomposed as
\begin{equation}\label{Eq:SWAP-decomp}
S_{[i,j]} = R_{[j]}R_{[j]}C_{[i,j]}R_{[i,j]}C_{[i,j]}R_{[i,j]}C_{[i,j]}.
\end{equation}

We include a circuit diagram for the implementation of SWAP in Figure \ref{Fig:SWAP}.
Note that, in the qubit case, it is known (see, for example \cite{BellEtAl_Arxiv13} and \cite{NielChuang}) that SWAP can be implemented using CNOT gates alone. This follows from the fact that, in the qubit case (and the qubit case alone), $R^2 = I$, the identity operator, and $R_{[i,j]}C_{[i,j]}R_{[i,j]} = C_{[j,i]}$.

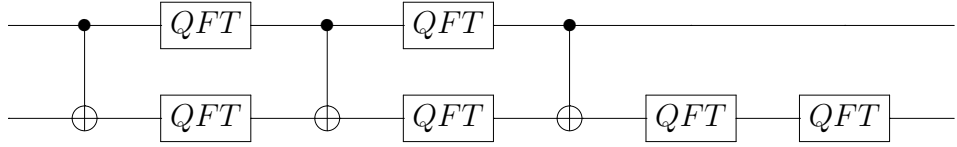
\begin{figure}
\begin{equation*}
 \Qcircuit @R=1.5em {
  & \ctrl{1} & \gate{QFT} & \ctrl{1} & \gate{QFT} & \ctrl{1} & \qw & \qw & \qw \\
  & \targ & \gate{QFT} & \targ & \gate{QFT} & \targ & \gate{QFT} & \gate{QFT} & \qw
  }\end{equation*}
  \caption{Circuit diagram of SWAP implementation using SUM and QFT gates. The QFT gates are labeled accordingly, and the SUM gates are indicated by a vertical line, with the solid dot on the control qudit and a $\oplus$ on the target qudit.\label{Fig:SWAP}}
\end{figure}

In \cite{NielsenBremnerDoddChilds02-PhysRevA}, the unitary representation of the QFT and Phase-shift gates were explicitly defined for any dimension $d$, and we include them here for completeness. The discrete QFT $\ov{R}$ is defined by
\begin{equation}\label{Eq:QFT}
  \ov{R}|j\ra \equiv \sum_{k=0}^{d-1} \omega^{jk}|k\ra
\end{equation}
for each $j \in \{0, 1, \dots, d-1 \}$. The unitary representation of the Phase-shift gate $\ov{P}$ is dependent on whether the dimension $d$ is even or odd. For odd $d$, $\ov{P}$ is defined by
\begin{equation}\label{Eq:Phase-odd}
  \ov{P}|j\ra \equiv \omega^{j(j-1)/2}|j\ra,
\end{equation}
and for even $d$, $\ov{P}$ is defined by
\begin{equation}\label{Eq:Phase-even}
  \ov{P}|j\ra \equiv \omega^{j^2/2}|j\ra
\end{equation}
for each $j \in \{0, 1, \dots, d-1 \}$. Straightforward computations verify that $\ov{R}X\ov{R}^\dagger = Z$, $\ov{R}Z\ov{R}^\dagger = X^{-1}$, satisfying equations \eref{Eq:QFT-map} and \eref{Eq:QFT-map2},  and $\ov{P}X\ov{P}^\dagger = XZ$, $\ov{P}Z\ov{P}^\dagger = Z$ when $d$ is odd, and $\ov{P}X\ov{P}^\dagger = \omega^{1/2}XZ$, $\ov{P}Z\ov{P}^\dagger = Z$ when $d$ is even, so that, up to global phase, equations \eref{Eq:Phase-map} and \eref{Eq:Phase-map2} are satisfied.

In \cite{Gottesman98-arXiv}, the unitary representation of the SUM gate, $\ov{C}$, was given by 
\begin{equation}\label{Eq:SUM-odd}
\ov{C}|i\ra|j\ra \equiv |i\ra|i + j\ra,
\end{equation}
for all $i,j \in \{0, 1, \dots, d-1\}$, with addition defined modulo $d$. This definition works for odd dimensions; however, it must be changed slightly for even $d$. In this case, we define
\begin{equation}\label{Eq:SUM-even}
\ov{C}|i\ra|j\ra \equiv \omega^{(i + j)/2} |i\ra|i + j\ra.
\end{equation}
It is easily verified that, when $d$ is odd, we have
\begin{eqnarray}\label{Eq:unitarySUM-odd}
\ov{C}(X\otimes I) \ov{C}^\dag &= X \otimes X\\
\ov{C}(I\otimes X) \ov{C}^\dag &= I \otimes X\\
\ov{C}(Z\otimes I) \ov{C}^\dag &= Z \otimes I\\
\ov{C}(I\otimes Z) \ov{C}^\dag &= Z^{-1} \otimes Z,
\end{eqnarray}
and when $d$ is even,
\begin{eqnarray}\label{Eq:unitarySUM-even}
\ov{C}(X\otimes I) \ov{C}^\dag &= \omega^{1/2}(X \otimes X)\\
\ov{C}(I\otimes X) \ov{C}^\dag &= \omega^{1/2}(I \otimes X)\\
\ov{C}(Z\otimes I) \ov{C}^\dag &= \omega^{1/2}(Z \otimes I)\\
\ov{C}(I\otimes Z) \ov{C}^\dag &= \omega^{1/2}(Z^{-1} \otimes Z).
\end{eqnarray}
Hence, up to global phase, equations \eref{Eq:SUMmap} through \eref{Eq:SUMmap4} hold in both the even and odd cases. Thus, up to global phase, every unitary Clifford operator is defined for any finite-dimensional Hilbert space.

\section{Clifford Embeddings}\label{Sec:CliffordEmbed}
In \cite{GKP_PRA2001}, Gotessman \emph{et al.} addressed the possibility of embedding an $n$-dimensional qunit into a $d$-dimensional qudit for $d>n$ to protect against small shift errors. Here we review this embedding procedure, which we will refer to as the \emph{GKP procedure} (for Gottesman, Kitaev, and Preskill), and then analyze the constructions of the corresponding logical Clifford operators under these embeddings. In particular, we show that under some asymetric embeddings, many logical qunit Clifford operations cannot be constructed from qudit Clifford operations.

Under the GKP procedure, if $d = r_xr_zn$, then we may embed an $n$-dimensional qunit inside a $d$-dimensional qudit by defining the logical qunit operators $X_L = X^{r_x}$ and $Z_L = Z^{r_z}$. The encoded computational basis for the qunit is given by
\begin{equation}\label{Eq:LogicalBasis}
|j\ra_L = \frac{1}{\sqrt{r_z}} \sum_{i = 0}^{r_z-1} |(j+in)r_x\ra,
\end{equation}
for $j = 0, 1, \dots, n-1$; the Fourier transform of this basis looks the same, except with $r_x$ and $r_z$ interchanged everywhere. This embedding has a corresponding stabilizer group generated by $X^{nr_x}$ and $Z^{nr_z}$, and is protected against all shifts of the form $X^aZ^b$ for $|a| < r_x/2$ and $|b| < r_z/2$.

Under this construction, the logical Clifford basis (and hence the entire logical qunit Clifford group) can be completely characterized by the following transformations. Up to global phase, the logical QFT must map
\begin{eqnarray}\label{Eq:LogicalQFT-map}
X^{r_x} &\mapsto Z^{r_z} \\
Z^{r_z} &\mapsto X^{-r_x} \label{Eq:LogicalQFT-map2}
\end{eqnarray}
and the logical Phase-shift must map
\begin{eqnarray}\label{Eq:LogicalPhase-map}
X^{r_x} &\mapsto X^{r_x}Z^{r_z}\\
Z^{r_z} & \mapsto Z^{r_z},\label{Eq:LogicalPhase-map2}
\end{eqnarray}
while the logical SUM must map
\begin{eqnarray}\label{Eq:LogicalSUMmap}
X^{r_x} \otimes I &\mapsto X^{r_x}\otimes X^{r_x},\\
I \otimes X^{r_x} & \mapsto I \otimes X^{r_x},\\
Z^{r_z} \otimes I & \mapsto Z^{r_z} \otimes I,\\
I \otimes Z^{r_z} & \mapsto Z^{-r_z} \otimes Z^{r_z}.\label{Eq:LogicalSUMmap4}
\end{eqnarray}

Note that $X^aZ^b$ acts as identity on the embedded space whenever $a$ (respectively $b$) is a multiple of $nr_x$ (respectively $nr_z$). As such, we may also allow products of such $X^aZ^b$ on the right hand side of each of these maps (e.g. $X^{r_x} \mapsto Z^{r_z}$ is equivalent to $X^{r_x} \mapsto X^{\alpha_1nr_x}Z^{(1+\alpha_2n)r_z}$ for arbitrary integers $\alpha_1$ and $\alpha_2$).

It is straightforward to see that in the symmetric case (i.e. $r_x = r_z$), we have the nice property that the qudit QFT, Phase-shift and SUM operators in fact act as logical qunit QFT, Phase-shift and SUM operators, respectively. Thus, in any symmetric embedding, all logical Clifford gate and Clifford circuit analysis can be immediately inferred from the corresponding qudit Clifford gates and circuits. Moreover, as long as the errors are small enough relative to $r = r_x=r_z$, the qudit Clifford gates are a fault-tolerant set of logical qunit Clifford gates.

In the asymetric case, however, all of these nice properties are generally not true. For example, consider the case in which $r_x = 3$, $r_z = 4$, and $n= 2$, so that $d = 24$. In this case, the qudit SUM gate applies the logical qubit CNOT transformation. However, the logical Phase-shift and QFT cannot be constructed via qudit Clifford operations. We will show this by contradiction in the case of the logical QFT.

Supposing the logical qubit QFT could be applied by a qudit Clifford operator, then we could classically describe the logical qubit QFT as a $2 \times 2$ symplectic matrix $M$ with entries in $\mbb{Z}_{48}$ such that
\begin{equation}
M\begin{bmatrix}3&0\\0&4\end{bmatrix} = \begin{bmatrix}0&-3\\4&0\end{bmatrix} = \begin{bmatrix}0&21\\4&0\end{bmatrix} \pmod{24}.
\end{equation}
Letting $M = \begin{bmatrix} a&b\\c&d\end{bmatrix}$, we see that, in particular, we must have $4b = 21 \pmod{24}$. Clearly, no such $b$ exists. In fact, we can multiply the desired outcomes by $X^{6\alpha_1}Z^{8\alpha_2}$ for arbitrary integers $\alpha_1$ and $\alpha_2$ (since this is equivalent to multiplication by identity on the embedding) and do the same comparison and likewise determine that no such symplectic matrix exists. That is, there exists no construction of $M$ that satisfies
\begin{equation}
M\begin{bmatrix}3&0\\0&4\end{bmatrix} = \begin{bmatrix}6\alpha_1&21+ 6\alpha_2\\4 + 8\alpha_3&8\alpha_4\end{bmatrix} \pmod{24},
\end{equation}
for arbitrary integers $\alpha_i$. It follows that no qudit Clifford operator exists that performs the logical qubit QFT. Similar arguments can be used to come to the same conclusion for the logical qubit Phase-shift. We say then that this embedding is \emph{non-symplectic}.

Thus, in general, the logical Clifford operators over asymetric embeddings of a qunit into a qudit cannot be analyzed inside the qudit Clifford framework. Note that, under \emph{any} embedding of a qunit into a qudit, the qudit SUM gate will always apply a logical qunit SUM gate. Thus, to determine whether a particular asymetric embedding is symplectic, we need only check against two gates; that is, we need only check whether logical qunit QFT and Phase-shift operations can be constructed from qudit Clifford operations.

\section{Conclusion}\label{Sec:Conclude}

The fact that, in any finite dimension, the Clifford group is generated by generalizations of the qubit Hadamard, Phase-shift, and CNOT gates shows that the group generalizes more naturally than previously expected. Furthermore, the algorithmic approach used to prove this result provides a straightforward and efficient means of implementing arbitrary qudit Clifford operations in any finite dimension using a minimal set of distinct gates. Thus, along with this inductive decomposition of Clifford operators, we can say that the QFT, Phase-shift, and SUM gates serve as a minimal, universal, and constructable characterization of the Clifford operators. Under the assumption that these gates can be built, this characterization is also physical, serving as an ideal characterization of the Clifford group.

Furthermore, because this characterization is closed, we need not necessarily assume external resources, such as the ability to make projective measurements or the existence of ancillas, when computing Clifford circuit complexity. That said, under certain physical constructions, it may be preferrable and/or easier to use external resources when implementing these Clifford operations. Indeed, in \cite{GKP_PRA2001} is was shown that the Clifford basis gates could be physically realized via linear optics and ``sqeezing,'' while other constructions that implement Clifford operations as a subset (e.g. \cite{BdeGS_PRA02}, \cite{Strauch_PRA11} and references therein) require that the quantum system be coupled with an external ancillary system. Nevertheless, this closed characterization exists apart from any single physical construction, and is hence not limited to any particular implementation.

Applying these results to the GKP embedding procedure, we find that for any symmetric embedding of a qunit into a qudit, the qudit Clifford basis gates in fact implement their respective logical gates, and hence act as a logical basis for the logical Clifford group. Though this is not generally true for asymetric embeddings, in general we need only check two logical transformations to determine whether arbitrary logical qunit Clifford operations exist within the actual qudit Clifford framework. Further research is warrented to determine exactly what conditions are required on an embedding so that the qunit Clifford gates can always be characterized within the qudit Clifford framework.


Lastly, we point out that this paper characterizes the $n$-qudit Clifford group in which all $n$ quantum systems exist in Hilbert spaces of the same dimension. This characterization does not apply to operators that act on a collection of systems of differing dimensions, that is, ones that exists in the space $\mc{H} = \mc{H}_{d_1} \otimes \mc{H}_{d_2} \otimes \cdots \otimes \mc{H}_{d_n}$, where each $\mc{H}_{d_i}$ is a $d_i$-dimensional complex Hilbert space. An open question is whether a classical representation exists that allows for the unique characterization of the corresponding Clifford operators acting on this ``mixed'' state space, up to global phase, via a minimal gate set.

\ack

The author would like to thank J.~E.~Troupe and A.~D.~Parks, for their helpful review and discussions. Additionally, the author would like to thank M.~Grassl for helpful comments on a previous version of this manuscript, and for providing additional resources. This research was funded by the Naval Surface Warfare Center, Dahlgren Division (NSWCDD) In-house Laboratory Independent Research (ILIR) program.

\appendix
\section{PEG Algorithm Applied to Single-Qudit Symplectic Matrices}\label{PEG-Algorithm}
The Euclidean factoring algorithm is a method of determining the greatest common divisor of two nonzero integers, and takes $\mc{O}(\ln(D))$ steps when working over $\mbb{Z}_D$. It works in the following way. If we would like to compute $\gcd(a,b)$ for $a,b \in \{1, 2, \dots, D-1\}$, then suppose $a = m_1b + c_1 \pmod{D}$. It follows that $\gcd(a,b) = \gcd(b,c_1)$. Likewise, we can then write $b = m_2c_1 + c_2 \pmod{D}$, and hence $\gcd(a,b) = \gcd(b,c_1) = \gcd(c_1, c_2)$. The Euclidean factoring algorithm works by iterating the above steps until we obtain a $c_k$ such that $c_k = 0$. Then $c_{k-1} = \gcd(a,b)$.

Let $M = \begin{bmatrix} p&q\\r&s \end{bmatrix}$ be symplectic, and suppose none of $p, q, r, s$ are invertible. Since $M$ is invertible, then for any column $M_{.,i}$ of $M$, it follows that $\gcd(M_{.,i})$ is invertible; in particular, $\gcd(q,s)$ is invertible. The goal of the PEG Algorithm is to create an $M^\p = \begin{bmatrix}p^\p & q^\p\\ r^\p & s^\p \end{bmatrix}$ such that $q^\p = \gcd(q,s)$. Following the Euclidean algorithm, suppose $q = m_1s + c_1 \pmod{D}$. Then we apply $(RP^{m_1}R^3)$ to $M$ to get 
\begin{equation}
(RP^{m_1}R^3)M = \begin{bmatrix} p-m_1r & c_1\\ r & s\end{bmatrix} = M^\p_1.
\end{equation}
Suppose now that $s = m_2c_1 + c_2 \pmod{D}$. Then we apply $P^{-m_2}$ to $M_1^\p$ to get 
\begin{equation}
P^{-m_2}M_1^\p = \begin{bmatrix}p-m_1r & c_1\\ r-m_2(p-m_1r) & c_2\end{bmatrix} = M_2^\p.
\end{equation}
Continue in this fashion to end the Euclidean algorithm. In the end, we want a matrix of the form
\begin{equation}
M_{Fin}^\p = \begin{bmatrix} p^\p & \gcd(q,s)\\ r^\p & 0 \end{bmatrix}.
\end{equation}
If, at the end of the algorithm, we instead have a matrix of the form 
\begin{equation}
M_k^\p = \begin{bmatrix}p^\p & 0\\ r^\p & \gcd(q,s)\end{bmatrix},
\end{equation}
then add one more step, namely, apply $R^3$ to $M_k^\p$ to get
\begin{equation}
R^3M_k^\p = \begin{bmatrix} r^\p & \gcd(q,s)\\ -p^\p & 0 \end{bmatrix} = M_{Fin}^\p.
\end{equation}

Since $\gcd(q,s)$ is invertible, we now have a matrix in the form described in Case \ref{Case:q-invert} in Section \ref{Sec:BuildClifford}. Thus, we can build $M_{Fin}^\p$ straightforwardly. We obtain our original matrix $M$ by sequentially applying the inverses of the operations used in the PEG Algorithm.

As a very simple example, suppose
\begin{equation}
M = \begin{bmatrix} 10&9\\3&4\end{bmatrix},
\end{equation}
where the entries are over $\mbb{Z}_{12}$. $M$ is symplectic, since $\det(M) = 13 \equiv 1 \pmod{12}$. None of the entries of $M$ are invertible, since none are relatively prime to 12. Then following the PEG algorithm, we observe that $9 = 2\cdot 4 + 1$. So we apply $(RP^2R^3)$ to $M$ to get
\begin{equation}
M^\p_1 = (RP^2R^3)M = \begin{bmatrix}1&-2\\0&1\end{bmatrix}M = \begin{bmatrix}4&1\\3&4\end{bmatrix}.
\end{equation}
Following the algorithm to completion, we observe that $4 = 4\cdot 1 + 0$, and so we apply $P^{-4} = P^8$ to $M_1^\p$ to get
\begin{equation}
M_2^\p = P^8M_1^\p = \begin{bmatrix}1&0\\8&1\end{bmatrix}M_1^\p = \begin{bmatrix}4&1\\11&0\end{bmatrix}.
\end{equation}
Now $M_2^\p$ has the form described in Case \ref{Case:q-invert}, and so we know that $M_2^\p = PRPRP^5$. To obtain the original $M$, we simply apply $P^{-8} = P^4$ on the left, followed by $(RP^2R^3)^{-1} = RP^{10}R^3$. Thus, the $M$ in our example is decomposed into $M = RP^{10}R^3P^5RPRP^5$.

\section{PEG Algorithm Adapted for SUM Gate}\label{Appendix:PEG-SUM}
We show here how to implement an algorithm similar to the PEG Algorithm that uses the SUM gate to map the vectors $(0,0,a,b)^T$ to either $(0,0, \gcd(a,b), 0)^T$ or $(0,0,0, \gcd(a,b))^T$ for $a, b \in \{1, 2, \dots, d-1\}$. Just as in the PEG Algorithm, suppose $a = m_1b + c_1 \pmod{d}$, so that $\gcd(a,b) = \gcd(b, c_1)$. It follows that
\begin{equation}
C_{[1,2]}^{m_1}(0,0,a,b)^T = \begin{bmatrix} 1&0&0&0\\m_1&1&0&0\\0&0&1&-m_1\\0&0&0&1 \end{bmatrix} \begin{bmatrix} 0\\0\\a\\b \end{bmatrix} = \begin{bmatrix} 0\\0\\c_1\\b \end{bmatrix}.
\end{equation}
Now suppose that $b = m_2c_1 + c_2$ so that $\gcd(c_1,b) = \gcd(c_1, c_2)$. In this case, we switch the control and target qudit, described by applying a product of $C_{[2,1]} = C_{[1,2]}^T$:
\begin{equation}
C_{[2,1]}^{m_2}(0,0,c_1,b)^T = \begin{bmatrix} 1&m_2&0&0\\0&1&0&0\\0&0&1&0\\0&0&-m_2&1 \end{bmatrix} \begin{bmatrix} 0\\0\\c_1\\b \end{bmatrix} = \begin{bmatrix} 0\\0\\c_1\\c_2 \end{bmatrix}.
\end{equation}
Following in this fashion, the algorithm converges to either of the vectors $(0,0,\gcd(a,b),0)^T$ or $(0,0,0,\gcd(a,b))^T$. Suppose, in this case, that the final step of the algorithm returns the vector $(0,0,0,\gcd(a,b))^T$, but the vector $(0,0,\gcd(a,b),0)^T$ is desired. This is easily remedied by observing that $C_{[2,1]}C_{[1,2]}^{D-1}(0,0,0,\gcd(a,b))^T = (0,0,\gcd(a,b),0)^T$. It follows that, in the unitary case, up to some global phase, using sequences of the $\ov{C}$ gate with choice of control and target qudit, we can map any $Z^a\otimes Z^b$ to either $I \otimes Z^{\gcd(a,b)}$ or $Z^{\gcd(a,b)} \otimes I$ for any $a, b \in \{1, 2, \dots, d-1\}$. Note that we could have alternatively applied the SWAP operator $S_{[1,2]}$ in this last step, but chose to use the above method instead to show that the entire algorithm could be performed using only SUM operators.

\section*{References}
\bibliographystyle{unsrt}
\bibliography{bibfile}

\end{document}